\documentclass[11pt]{article}

\usepackage{dsfont}

\usepackage{tikz}
\usepackage{pgfplots}
\pgfplotsset{compat = newest}
\usetikzlibrary{decorations.pathreplacing,calligraphy}

\usepackage{amssymb,amsmath,amsthm}
\usepackage{bbm}
\usepackage[noend]{algpseudocode}
\usepackage{graphicx}
\usepackage{verbatim}
\usepackage{url,xspace,hyperref}
\usepackage{makecell}
\usepackage{cite}
\usepackage{caption}
\usepackage{subcaption}
\usepackage{multirow}
\usepackage{multicol}
\usepackage{latexsym}
\usepackage{amsmath,amssymb,enumerate}
\usepackage{xspace}
\usepackage{algorithm,algorithmicx}
\usepackage{float}
\usepackage{xcolor}
\usepackage{mathrsfs}
\usepackage{cleveref}
\usepackage{cancel}

\usepackage{fullpage}
\usepackage{geometry}
\geometry{letterpaper,tmargin=1in,bmargin=1in,lmargin=1in,rmargin=1in}
\usepackage{setspace}

\newcommand{\ignore}[1]{}

\ignore{ 

\documentclass[11pt]{article}

\usepackage{dsfont}

\usepackage{tikz}
\usepackage{pgfplots}
\pgfplotsset{compat = newest}
\usetikzlibrary{decorations.pathreplacing,calligraphy}
\usepackage{authblk}
\usepackage{amssymb,amsmath,amsthm}
\usepackage{bbm}
\usepackage[noend]{algpseudocode}
\usepackage{graphicx}
\usepackage{verbatim}
\usepackage{url,xspace,hyperref}
\usepackage{makecell}
\usepackage{cite}
\usepackage{caption}
\usepackage{subcaption}
\usepackage{multirow}
\usepackage{multicol}
\usepackage{latexsym}
\usepackage{amsmath,amssymb,enumerate}
\usepackage{xspace}
\usepackage{algorithm,algorithmicx}
\usepackage{float}
\usepackage{xcolor}
\usepackage{mathrsfs}
\usepackage{cleveref}
\usepackage{cancel}

\usepackage{fullpage}
\usepackage{geometry}
\geometry{letterpaper,tmargin=1in,bmargin=1in,lmargin=1in,rmargin=1in}
\usepackage{setspace}
}

\newtheorem{theorem}{Theorem}[section]
\newtheorem{definition}{Definition}[section]

\newtheorem{lemma}[theorem]{Lemma}

\newtheorem*{fact*}{Fact}
\newtheorem*{remark*}{Remark}
\def\odt{\ensuremath{{\sf ODT}}\xspace}
\def\opt{\ensuremath{{\sf OPT}}\xspace}
\def\alg{\ensuremath{{\sf ALG}}\xspace}
\def\st{\ensuremath{{\sf Score}}\xspace}
\def\start{\ensuremath{{\sf start}}}
\def\endd{\ensuremath{{\sf end}}}
\def\E{\ensuremath{\mathbb{E}}}
\def\Stem(w){\mathsf{Stem}(w)}

\title{A Simple Approximation Algorithm for  Optimal Decision Tree}

\author{Zhengjia Zhuo\thanks{Department of Industrial and Operations Engineering, University of Michigan, Ann Arbor, USA. Research supported in part by NSF grant  CCF-2418495.} \and Viswanath Nagarajan$^*$}

\hypersetup{
pdftitle={A Simple Approximation Algorithm for  Optimal Decision Tree},
pdfsubject={q-bio.NC, q-bio.QM},
pdfauthor={David S.~|H|ippocampus, Elias D.~Striatum},
pdfkeywords={First keyword, Second keyword, More},
}

\begin{document}

\maketitle
\begin{abstract}
    Optimal decision tree (\odt) is a fundamental problem arising in applications such as active learning, entity identification, and medical diagnosis.
An instance of \odt is given by $m$ hypotheses, out of which an unknown ``true'' hypothesis is drawn according to some probability distribution. 
An algorithm needs to identify the true hypothesis by making queries: each query incurs a cost and has a known response for each hypothesis. 
The goal is to minimize the expected query cost to identify the true hypothesis. We consider the most general setting with arbitrary costs, probabilities and responses.
\odt is NP-hard to approximate better than $\ln m$ and there are $O(\ln m)$ approximation algorithms known for it.
However, these algorithms and/or their analyses are quite complex. Moreover, the leading constant factors are large. 
We provide a simple algorithm and analysis for \odt, proving an approximation ratio of $8 \ln m$. 
\end{abstract}

\section{Introduction}
Identifying a  target hypothesis from a pool of candidates using sequential queries is a  fundamental task in many applications~\cite{Moret82,Murthy98}. We briefly mention three examples here. 
In medical diagnosis, when faced with a new patient 
 one wants to identify the disease inflicting them  using medical tests at low cost. 
In species identification, one wants to identify the species of an entity by checking the fewest number of characteristics. In active learning, given  data points with associated labels and a set of classifiers  (each of which maps  data points to labels), one wants to identify the correct classifier by querying the actual labels at the fewest number of points.  In each of these applications, it is also  natural to assume an {\em a priori} distribution for the target hypothesis, so that the goal is to minimize the {\em expected} cost of identification. 

The optimal decision tree (\odt) problem is used to model  such identification tasks and has been  widely-studied for over 50 years; see e.g.,~\cite{GareyG74,Loveland85,Kosaraju1999,Dasgupta2004,ChakaravarthyPRS09,GuilloryB09,GuptaNR17,Ray2019}. \odt was shown to be NP-hard in \cite{HyafilR76} and it is also NP-hard to approximate to a factor better than $\ln m$, where $m$ is the number of hypotheses~\cite{ChakaravarthyPRS09}. Several different approximation algorithms have been designed for \odt, resulting in an asymptotically optimal $O(\ln m)$-approximation even in the most general setting with non-uniform probabilities, non-uniform query costs and an arbitrary number of possible query-responses~\cite{GuptaNR17,CicaleseLS17,NavidiKN20}.   However, these algorithms/analyses are quite complex and the leading constants in the approximation ratios are very large. Our main result is a simple algorithm and analysis for the general \odt problem (with arbitrary probabilities, costs and responses) showing an approximation ratio of $8\cdot \ln m$.

\def\sse{\subseteq}
\subsection{Problem Definition}
In the optimal decision tree  (\odt) problem, we have a finite hypothesis class $H$ containing $m$ {\em hypotheses}. An unknown target hypothesis $h^* \in H$  is drawn from a (known) probability distribution $D=\{p_h\}_{h\in H}$, where $p_h=\Pr[h^*=h]$ and $\sum_{h\in H} p_h=1$. There is also a set $E$ of {\em queries} and a set $R$ of {\em responses}. Each query $e\in E$ has a cost $c(e)\ge 0$ and maps any hypothesis $h\in H$ to a  response $e(h) \in R$. The table of responses $e(h)$ for all queries $e\in E$ and hypotheses $h\in H$ is known upfront. The goal is to identify the target $h^*$ using sequential queries. 
When a query $e$ is performed, we observe the response $e(h^*)$, which can be used to eliminate all hypotheses $h\in H$ with $e(h)\ne e(h^*)$.  The objective is to minimize  the expected cost to identify $h^*$.

A deterministic solution (or policy) for \odt corresponds to a decision tree $\pi$ where each internal node $w$ is labeled by a query $\pi_w\in E$ and each  branch is labeled by a  response. The decision tree execution starts at the root of $\pi$  and proceeds as follows: at any internal node $w$, we  perform   query $\pi_w$,  observe its response $r$ and move along the branch labeled $r$ to the next node.  Note that each hypothesis $h\in H$ traces a unique  path  in $\pi$ (when all responses are according to $h$), ending at some leaf node $\ell(\pi,h)$. The decision tree $\pi$ is said to be  feasible   if the leaf nodes $\ell(\pi,h)$ are distinct for each $h\in H$. So there is a one to one correspondence between leaf nodes in the decision tree and the hypothesis in $H$. The expected cost of $\pi$ is $\sum_{h\in H} p_h\cdot C(\pi,h)$ where $C(\pi,h)$ is the total cost of queries on the path in $\pi$ traced by $h$. 

Optimal decision tree  can also be formulated as a Markov Decision Process (MDP) with state-space $2^{E\times R}$ (corresponding to previously observed  query-response pairs) and actions $E$ (corresponding to the next query to perform). While this view is not useful in coming up with efficient algorithms  (as the state space is exponential), it may be useful conceptually. For example, this shows that there is always a deterministic optimal policy for \odt: so we can just focus on deterministic policies (as defined above).

It will also be convenient to associate a {\em state} with  any node $w$ of decision tree $\pi$, that contains all 
query-response pairs on the path in $\pi$ from  root to $w$. That is, if the root-$w$ path has  queries  $e_1,\cdots e_k$  with (respective) responses $r_1,\cdots r_k$ then the state at $w$ is $\{(e_i,r_i) : 1\le i\le k\}$.
Note that each node in $\pi$ has a distinct state. To reduce notation, when it is clear from context, we use $w$ to also denote the state at node $w$.   For any state $w=\{(e_i,r_i) : i\in [k]\}$, let $V(w)\sse H$ denote all the hypotheses that are compatible with this state, i.e., $V(w) = \{h\in H : e_i(h) = r_i , \forall i\in [k] \}$. In the active learning literature~\cite{Dasgupta2004}, $V(w)$ is also called the ``version space''. At state $w$, all the hypotheses in $H\setminus V(w)$ are said to be {\em eliminated}. Note that we have $|V(w)|=1$ for any {\em leaf node} $w$ in decision tree $\pi$.  

For any subset $S\sse H$, we use $p(S):= \sum_{h\in S} p_h$ to denote the total probability in $S$. We use $p_{min}=\min_{h\in H} p_h$ to denote the minimum probability of a hypothesis. Similarly, $c_{max}=\max_{e\in E} c(e)$ and $c_{min}=\min_{e\in E} c(e)$ denote the maximum/minimum query costs.

\paragraph{Greedy policies.} Several algorithms for \odt (including ours) are greedy and involve selecting one ``maximum benefit'' query at each step. Specifically, at any state $w$ of a greedy policy, we compute a numeric value for each query $e\in E$ (that depends only on $w$ and $e$) and select the query that maximizes this value. For any state $w$ and query $e$, let $\{S_i(e,w) : i\in R\} $ denote the partition of $V(w)$ based on the outcome of query $e$, i.e., $S_i(e,w) = \{h\in V(w) : e(h)=i\}$ for all $i\in R$; also define $s_i(e,w):=\frac{p(S_i(e,w))}{p(V(w))}$ the conditional probability of observing response $i$ upon query $e$. The choice of the greedy criterion greatly influences the approximation ratio achieved as well as its analysis. Indeed, several different greedy criteria for \odt have been used/analyzed in prior work. Table~\ref{Approx Rate} summarizes these results; to reduce clutter, we drop  the dependence on state $w$ and query   $e$ in the greedy criteria.

{\small 
\begin{table}[t]
\centering
    \renewcommand\arraystretch{1.5}
\begin{tabular}{|c|c|c|c|c|c|}
\hline & $|R|>2$  & NU cost & NU prob. & Ratio & Greedy criterion\\
\hline \text {\cite{Kosaraju1999} } & $\times$ & $\times$ & $\checkmark$ & $O(\log m)$& $ \min(s_1,s_2) $ \\
\text { \cite{Dasgupta2004}} & $\times$ & $\times$ & $\checkmark$ & $4 \,  \ln \frac{1}{p_{min}} $& $  \min(s_1,s_2)$   \\
\text {  \cite{Adler2008}} & $\times$ & $\checkmark$ & $\times$ & $\ln m$ &$ \frac{1}{c}  |S_1| |S_2| $ \\
\text { \cite{Chaka2011} } & $\checkmark $& $\times$ &$ \checkmark $& $O(\log |R| \log m)$ & $    \sum_{i\neq j} s_i s_j $ \\
\text { \cite{ChakaravarthyPRS09} } & $\checkmark $& $\times$ &$ \times$& $5.8\, \ln m$ & $   |V| - \max_{i} |S_i|$ \\
\text {  \cite{GuilloryB09}} & $\checkmark$ & $\checkmark$ & $\checkmark$ &$ 108 \, \ln \left( m \cdot \frac{c_{max}}{c_{min}}\right)$ & $ \frac{1}{c} \left(1-\sum_{i=1}^q s_i^2\right)$\\
\text { \cite{GuptaNR17,CicaleseLS17} } & $\checkmark$  &$\checkmark$ &$\checkmark$ & $O(\log m)$& not greedy \\
\text { \cite{NavidiKN20} } &  $\checkmark$  &$\checkmark$ &$\checkmark$ & $O(\log m)$ & $\frac{1}{c} \left(|V| (1-s_1) + \sum_{i} |S_i|(1-s_i)\right)$  \\
\cite{Ray2019} & $\times$ & $\times$ & $\times$ & $\frac{(4+\epsilon)\ln m}{\ln\log m}$& $\min(s_1,s_2)$  \\
\text { \cite{EsfandiariKM21} } &  $\checkmark$   &$\times$ &$\checkmark$& $  2\, \ln \frac{1}{p_{min}} $ & $\sum_{ i} |S_i|(1-s_i)$ \\
\text { \cite{althani2024}} &  $\checkmark$   &$\checkmark$ &$\times$& $ 4 \, \ln m $ & $\frac{1}{c} \sum_{ i} |S_i|(|V|-|S_i|)$ \\
\hline
\text{\bf This paper} & $\checkmark$ &$ \checkmark$ &$ \checkmark$  & $\mathbf{  8 \ln m }$ & $\frac{1}{c} \sum_{i} |S_i|(1-s_i)$ \\
\hline
\end{tabular}
    \caption{Approximation ratios summary.  Here, NU denotes ``non-uniform''. }
    \label{Approx Rate}
    \vspace{5mm}
\end{table}
}

 \subsection{Related Work}
Early work~\cite{GareyG74,Loveland85} on the \odt problem focused on the special case of binary responses ($|R|=2$) and analyzed the performance of a  greedy policy in special instances.  The first approximation algorithm for  \odt was by \cite{Kosaraju1999}, who proved an $O(\ln m)$-approximation in the case of uniform costs and  binary responses. \cite{Dasgupta2004} showed how \odt is applicable to   active learning and re-discovered the same greedy policy as \cite{Kosaraju1999}, again for uniform  costs and  binary responses. \cite{Adler2008} analyzed a different greedy policy and proved a $\ln m$ approximation ratio for uniform probabilities and binary responses (via a simple proof).

\odt with non-binary responses ($|R|>2$) has also been studied, starting with \cite{Chaka2011} who obtained an $O(\log |R| \log m)$-approximation for uniform costs. They also proved that \odt (with non-uniform probabilities) cannot be approximated better than $\ln m$ and \odt with uniform probabilities cannot be approximated better than factor $4$. Then,  \cite{GuilloryB09} considered the most general case of \odt (with arbitrary costs, probability and responses) and obtained an $O(\ln \frac{m c_{max}}{c_{min}})$-approximation ratio.\footnote{Henceforth, ``general \odt'' will refer to the problem  with arbitrary costs, probability and responses. } \cite{GuptaNR17} obtained the first  $O(\ln m)$-approximation for the general \odt problem; their algorithm  was somewhat simplified in \cite{CicaleseLS17}. We note that both these algorithms were non greedy and involved carefully selecting a sequence of queries in each step. Later, \cite{NavidiKN20} obtained a greedy policy for  general \odt which was also shown to be  an $O(\ln m)$-approximation. Still, the analysis for general \odt was  complex in all three papers \cite{GuptaNR17,CicaleseLS17,NavidiKN20}  and the leading constant factors were large. (While the leading constant was not explicitly calculated in \cite{GuptaNR17,CicaleseLS17,NavidiKN20}, we  estimate  that these factors are over $100$.) 

A different  approach for \odt is via the framework of adaptive submodularity~\cite{GK17}. In particular,  recent results by \cite{EsfandiariKM21} and \cite{althani2024} imply approximation ratios of  $2\ln(1/p_{min})$ (for uniform costs) and $4\ln m$ (for uniform probability). While the resulting algorithms are simple greedy ones, a  full analysis  requires multiple steps: defining  adaptive submodularity, proving that the goal in \odt can be expressed as covering an adaptive-submodular function  and minimum-cost cover  in   adaptive submodularity. Moreover, the best ratio for general \odt that one can expect via this approach is $\ln(1/p_{min})$, which could be arbitrarily large as a function of  $m$. This is because the adaptive-submodular function corresponding to \odt has a ``goal value'' of $\frac{1}{p_{min}}$ \cite{GK17}.

For the special case of \odt with uniform cost and probabilities and binary response, \cite{Ray2019} showed that the natural greedy policy achieves a sub-logarithmic approximation ratio of $O(\frac{\ln m}{\ln\ln m})$. Their result also extends to a $\left(\frac{12\ln (1/p_{min})}{\ln \opt}+\ln(1/p_{min})\right)$ approximation for general responses and probabilities (still unit costs).    

\subsection{Main Result and Techniques}

We analyze a natural greedy policy for general \odt.  Observe that  the goal in \odt is to  eliminate $m-1$ hypotheses, so that the only remaining hypothesis is  $h^*$. Moreover, the number of newly eliminated hypotheses due to any query is random: so it is natural to consider the {\em expectation} of this quantity as a measure of progress.  This motivates our greedy criterion: select  query $e$ that maximizes the ratio of expected number of newly eliminated hypotheses to its cost $c(e)$.  We prove:
\begin{theorem}\label{thm:main}
Our greedy policy for general \odt has approximation ratio at most $8\cdot (1+\ln m)$.  
\end{theorem}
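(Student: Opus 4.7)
The plan is to adapt Chv\'atal's charging-style analysis of greedy weighted set cover to the adaptive decision-tree setting, using the expected number of newly eliminated hypotheses as the analogue of ``coverage.'' Let $G$ denote the greedy tree. Since an internal node $w$ is visited exactly when $h^*\in V(w)$, its visit probability is $p(V(w))$, so the expected cost decomposes as $C(G)=\sum_{w\in\operatorname{internal}(G)} p(V(w))\,c(e_w)$. By the greedy rule, $c(e_w)=P_w/\rho_w$, where $P_w:=\sum_i|S_i(e_w,w)|(1-s_i(e_w,w))$ is the expected progress at $w$ and $\rho_w$ is greedy's chosen ratio; so the entire argument reduces to lower-bounding $\rho_w$ against \opt.

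The key lemma I would prove asserts that at every state $w$ with $|V(w)|\ge 2$, $\rho_w \ge (|V(w)|-1)/(\alpha\cdot \opt_{V(w)})$, where $\opt_{V(w)}$ is the optimal expected cost for the sub-instance restricted to $V(w)$ under the conditional distribution and $\alpha$ is a small constant. The argument starts by averaging over the optimal sub-tree $\pi^*_{V(w)}$: its total expected progress equals $|V(w)|-1$ and its expected cost equals $\opt_{V(w)}$, so some internal node of $\pi^*_{V(w)}$ attains progress-to-cost ratio at least $(|V(w)|-1)/\opt_{V(w)}$. The subtle point---and the principal obstacle of the whole proof---is transferring this ratio from a deep node of $\pi^*_{V(w)}$ back to state $w$ at which greedy is evaluating queries; for the transfer I would use the pairwise-contribution identity $p(V)\sum_i|S_i|(1-s_i)=\sum_{h,h'\in V,\,e(h)\ne e(h')}p_{h'}$ to argue that the expected progress of any fixed query is (nearly) monotone in the version space, losing only a small constant factor.

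Substituting the key lemma into the decomposition of $C(G)$ and using $p(V(w))\,\opt_{V(w)}\le \sum_{h\in V(w)} p_h\,c^*(h)$ (valid because $\pi^*$ restricted to $V(w)$-realizations is feasible for the sub-instance, where $c^*(h)$ is the cost of $h$'s path in $\pi^*$), one gets $C(G)\le \alpha\sum_{w}\bigl(P_w/(|V(w)|-1)\bigr)\cdot\sum_{h\in V(w)} p_h\,c^*(h)$. Introducing the potential $\Phi(w):=p(V(w))\,|V(w)|$, so that $p(V(w))\,P_w=\Phi(w)-\sum_i\Phi(w_i)$, and invoking $1-x\le -\ln x$, I would reorganize this double sum into a telescoping logarithmic sum---the adaptive analogue of the harmonic $H_m$ in Chv\'atal's proof---that totals at most $(1+\ln m)\,\opt$. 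Tracking the constants (a factor $2$ from the transfer in the key lemma, a factor $2$ from passing from $\opt_{V(w)}$ to the global \opt, and a factor $2$ from the harmonic step) finally yields $C(G)\le 8(1+\ln m)\,\opt$, as claimed.
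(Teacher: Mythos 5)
Your proposal takes a genuinely different route from the paper (a Chv\'atal-style per-node charging argument rather than the paper's comparison of non-completion probabilities $a_t$ and $o_{t/L}$ via integrals of a score function), but it has a genuine gap at exactly the step you yourself flag as ``the principal obstacle'': the transfer in the key lemma. The averaging argument over $\pi^*_{V(w)}$ hands you a node $u$ deep in the optimal subtree, with version space $V_u\subseteq V_w$, whose query $e$ has a good ratio \emph{relative to $V_u$}. The pairwise identity you cite gives $p(V_u)\,\Delta(e|u)\le p(V_w)\,\Delta(e|w)$, i.e.\ $\Delta(e|w)\ge \frac{p(V_u)}{p(V_w)}\Delta(e|u)$, and under non-uniform probabilities the factor $p(V_u)/p(V_w)$ can be arbitrarily small --- there is no constant-factor monotonicity here. (This is not a technicality: the ``number of eliminated hypotheses'' objective is \emph{not} adaptive submodular under non-uniform priors, which is precisely why the adaptive-submodularity route only yields $\ln(1/p_{\min})$ rather than $\ln m$.) The paper avoids single-node transfer entirely: it aggregates over a whole root path $\mathsf{Stem}(w)$ of the optimal tree obtained by always following the \emph{heavy} branch $B_w^e$, evaluates every query on that path at the state $w$ itself (so no transfer across version spaces is needed), and uses the elementary bound $|V_w|-|S_i|\ge |V_w|/2$ for each non-heavy part to show the contributions along the path sum to at least $\tfrac12\,\frac{p(V_w)-p(I_w)}{p(V_w)}$. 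Your key lemma would need an argument of this path-based kind; as proposed, it does not go through.

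There is a second, smaller gap in the final telescoping step. After swapping sums you need $\sum_{w\ni h} P_w/(|V_w|-1)$ along $h$'s greedy path to be $O(\ln m)$, but $P_w=\Delta(e_w|w)$ is the \emph{expected} decrement under the conditional distribution, not the decrement realized on $h$'s branch. If each greedy query splits off a single hypothesis carrying conditional probability $1/2$ and $h$ always falls in the large part, then every term is about $1/2$ and the sum is $\Omega(m)$. The harmonic bound $1+\ln m$ holds only for the realized decrements $\sum_i (V_{w_i}-V_{w_{i+1}})/(V_{w_i}-1)$, which is how the paper's upper-bound lemma is organized (via the per-hypothesis functions $G_h$); your bookkeeping, which attaches the weight $\sum_{h\in V_w}p_h c^*(h)$ to $w$ and then charges the averaged $P_w$ to each $h\in V_w$, does not reduce to that telescoping sum. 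Both steps would need to be reworked before the $8(1+\ln m)$ bound follows.
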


Our high-level proof structure is the same as in \cite{althani2024} for adaptive-submodular cover. Specifically, we view the expected cost of the greedy and optimal policies as integrals of ``non-completion'' probabilities. So, it suffices to relate the non-completion probability in greedy at cost $t$ to that of the optimum at a scaled  cost $t/L$ (for some approximation parameter $L$).   In order to relate these non-completion probabilities, we consider the integral of the greedy criterion value and prove  upper  and lower bounds on this quantity. The upper bound proof is basically the same as in \cite{althani2024}, which is reminiscent of the analysis for (deterministic) set cover. Our main technical  contribution is in the lower bound proof (Lemma~\ref{lemma-lower bnd}). 
To this end, for any state $w$ of the greedy policy (at cost $t$),  we identify a particular sequence  of queries in the optimal policy (called  $\Stem(w)$)  having total cost at most $t/L$ such that the expected  number of newly eliminated hypotheses (added over the queries in  $\Stem(w)$) is large. 

\section{\odt Greedy Algorithm}
In this section we prove Theorem~\ref{thm:main}.  
For any state $w\sse E\times R$, we use the following:
\begin{itemize}
    \item Recall that  $V(w)\sse H$ is the set of compatible hypotheses at state $w$. 
    \item $D|w$  denotes the conditional distribution of target hypothesis $h^*$, with $\Pr_{D|w}[h^*=h]=\frac{p_h}{p(V(w))}$ for all $h\in V(w)$ and $\Pr_{D|w}[h^*=h]=0$ if $h\not\in V(w)$.
    \item For any query $e\in E$, recall that $\{S_i(e,w) : i\in R\}$ is the partition of $V(w)$ based on $e$'s response and $s_i(e,w) = \Pr_{D|w}[h^*\in S_i(e,w)]$. 
    \item For any query $e\in E$, define $\Delta(e|w)$ as the expected number of newly eliminated hypotheses when query $e$ is performed at state $w$. Formally,   
  \begin{equation} \label{eq:delta def}
          \Delta(e|w)= E_{h^* \sim D|w}\left[  V(w)  - V(w\cup (e,e(h^*)) )   \right] = \sum_{ i\in R} s_i(e,w)\cdot \left(V(w) - S_i(e,w)\right) . 
  \end{equation}
    Throughout this paper, for any set $S\sse H$, when the context is clear, we also use $S$ to represent its cardinality $|S|$.

\end{itemize}

 Our greedy policy is described formally in Algorithm~\ref{alg:1} and is initialized with Greedy($\emptyset$).   While we have described the policy  as a recursive procedure, it is easy to build an explicit decision tree by running the policy under all choices of $h^*\in H$.

\begin{algorithm}[H]
\caption{Greedy Algorithm for optimal decision tree}\label{alg:1}
\begin{algorithmic}[1]

\Procedure{Greedy}{$w$}
\If{$|V(w)|=1$} 
\State Return the unique hypothesis in  $V(w)$ as $h^*$.
\Else
\State Choose $e\in E$ to maximize $\frac{\Delta(e|w)}{c(e)}$
\State Perform query $e$ to get  response $r$
\State Recurse on Greedy($w\cup \{(e,r)\}$)
\EndIf
\EndProcedure
\end{algorithmic}
\end{algorithm}

For convenience, we use ``time'' to refer to the cumulative cost of any policy. Let \alg  denote  the total cost incurred by our greedy policy, so the greedy objective is $\E[\alg]$. Similarly, let \opt denote the cost of the optimal policy. We also define the ``non-completion'' probabilities:
$$
a_t: = \Pr[\alg \geq t] \quad\mbox{and} \quad o_t:= \Pr[\opt  \geq t], \quad \forall t\ge 0.
$$

Clearly, we have

\begin{equation}\label{eq:ALG and at}
\E[\alg] = \int_{t\ge 0} a_t dt \quad \mbox{and} \quad  \E[\opt]= \int_{t\ge 0} o_t dt .    
\end{equation}

\begin{definition}\label{Def:Time}
Consider any state $w = \{(e_j,r_j) : j\in [k]\}$ in the greedy policy. We define its starting time $\start(w) := \sum_{j=1}^k  c(e_j)$. If $w$ corresponds to an internal node,  let $e$ denote the query labeling this node and define the end time $\endd(w) = \start(w)+c(e)$.  
\end{definition}

Note that the start time of $w$ is the earliest time when all the information  in $w$ becomes available, and its end time is the earliest time when some additional information becomes available (as the result of the next query).
For the root node we have $\start(\emptyset)=0$ and for any leaf node $w$ we have $\endd(w)=\start(w)$ because there is no query at the leaf node. See Figure~\ref{fig:example} for an example.

\begin{figure}[H]
    \centering
    \includegraphics[width=0.5\linewidth]{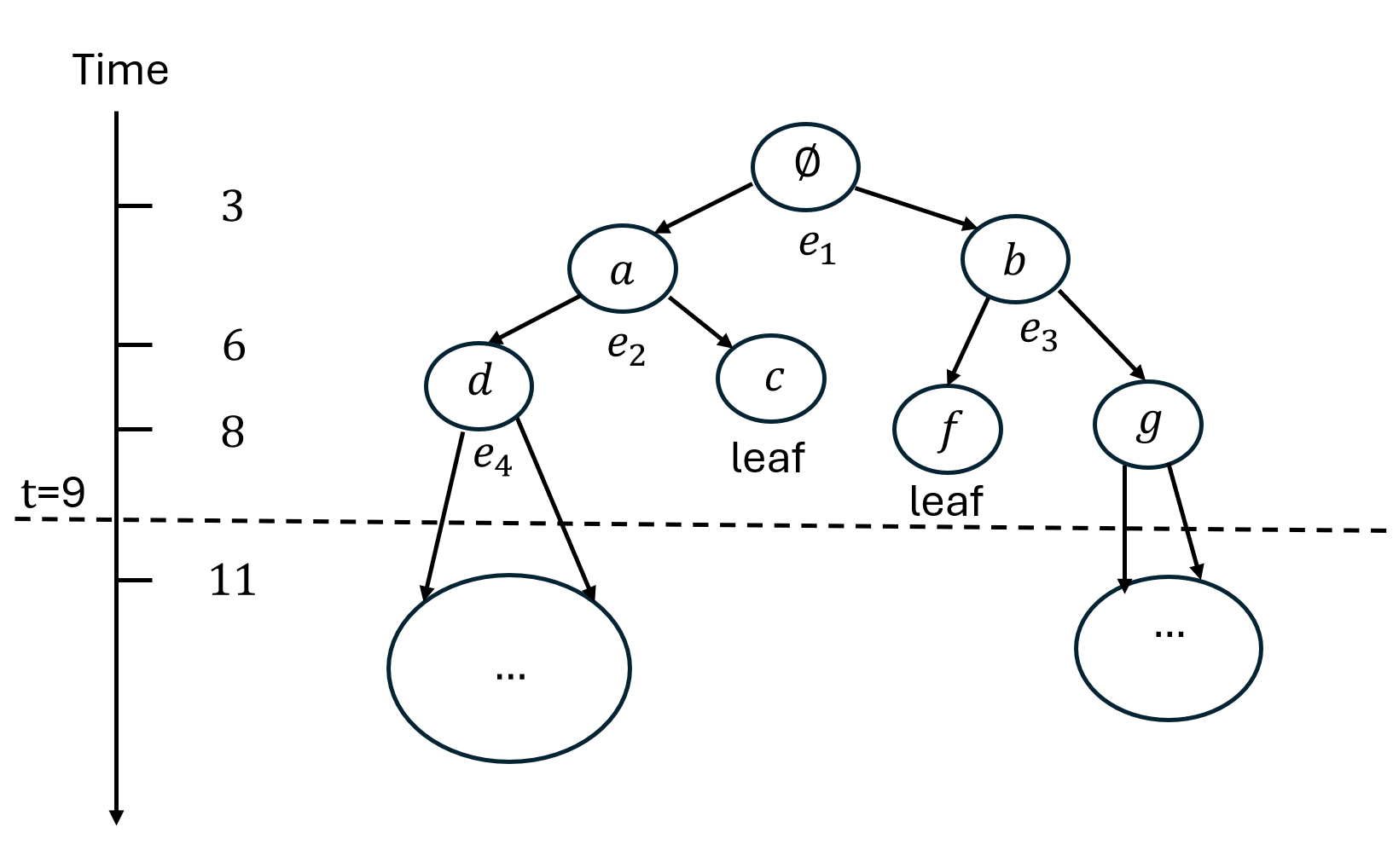}
    \caption{An example policy (decision tree).  The initial state is   $\emptyset$ and the rest are labeled  $a$-$g$. The costs $c(e_1)=c(e_2)=3$ and $c(e_3)=c(e_4) =5$. We have  $\start(b) = c(e_1)=3$ and $\endd(b)=\start(b)+c(e_3)=8$. Similarly, $\start(d) = c(e_1)+ c(e_2)=6$  and $\endd(d) = \start(d)+ c(e_4)=11$. The active states at time $t=9$ are $\{d,g\}$. }
    \label{fig:example}
\end{figure}

\def\act{\Gamma}
\begin{definition}[Active states]\label{Def:Sigma_t}
    For any time $t \geq 0$, we define the set of active states in the greedy policy at time $t$ as $
    \act_t: = \{w  | \start(w) \leq t \leq \endd(w) \} $. 
\end{definition}
Observe that $\act_t$ consists of all possible states that the greedy policy may  be in at time $t$ (if it has not already terminated). 
The following lemma is immediate from the definition of active states. For notational convenience, we use $V_w$ to denote $V(w)$ in the rest of this paper. 
\begin{lemma}\label{partition}
    For any time $t$, the sets $\{V_w : w\in \act_t\}$ are   disjoint and  
    $$\sum_{w\in \act_t}p(V_w) =\Pr(\alg \geq t)= a_t .$$
\end{lemma}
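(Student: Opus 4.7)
I will establish the two parts of the lemma separately: first the disjointness of the family $\{V_w : w\in \act_t\}$, then the sum formula. Both will follow from the tree structure of the greedy policy and the simple observation that $\start$ and $\endd$ times partition the cost axis along any root-to-leaf path.

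\textbf{Disjointness.} Take two distinct active states $w_1, w_2 \in \act_t$, viewed as nodes of the decision tree $\pi$ produced by Algorithm~\ref{alg:1}. I would split into two cases. If neither is an ancestor of the other, then on the root-to-$w_1$ and root-to-$w_2$ paths there is a least common ancestor $u$ from which the two paths leave along branches labeled with two distinct responses $r_1 \neq r_2$ to the same query $e$. By definition of $V(\cdot)$, every $h \in V_{w_1}$ satisfies $e(h)=r_1$ and every $h \in V_{w_2}$ satisfies $e(h)=r_2$, so $V_{w_1}\cap V_{w_2}=\emptyset$. If instead one is a proper descendant of the other, say $w_2$ descends from $w_1$, then at least one query lies on the path from $w_1$ to $w_2$, giving $\start(w_2)\ge \endd(w_1)$. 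The conditions $\start(w_2)\le t \le \endd(w_1)$ then force equality $\start(w_2)=t=\endd(w_1)$, which occurs on only a discrete (hence measure-zero) set of times; this is a boundary case that does not affect the integral identity \eqref{eq:ALG and at} used later.

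\textbf{Sum formula.} Condition on $h^* = h$. The greedy policy follows a unique root-to-leaf path $w_0=\emptyset, w_1, \dots, w_k = \ell(\pi,h)$, determined by the responses of $h$, and $h \in V_{w_j}$ for every $j$. By Definition~\ref{Def:Time}, $\start(w_j)=\endd(w_{j-1})$ for each $j\ge 1$, so the intervals $[\start(w_j),\endd(w_j)]$ tile $[0,\alg(h)]$, where $\alg(h)=\start(w_k)$ is the realized cost under $h^*=h$. Therefore $\alg(h)\ge t$ holds iff $t$ lies in at least one of these intervals, iff there exists $w\in\act_t$ with $h\in V_w$. Taking expectation over $h^*\sim D$ and using disjointness from the first part,
\[
a_t \;=\; \Pr[\alg \ge t] \;=\; \sum_{h\in H} p_h \cdot \mathbf{1}[\exists\, w\in \act_t : h \in V_w] \;=\; \sum_{w\in \act_t} p(V_w).
\]

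\textbf{Main obstacle.} The only subtlety is the ancestor/descendant boundary case described above, where $w_1$ and its child $w_2$ both formally lie in $\act_t$ at a single transition time. I would simply note that the set of such transition times is countable, so disjointness (and hence the stated equality) holds for all but countably many $t$, which is sufficient for every downstream use of the lemma (all of which occur under an integral in $t$, per \eqref{eq:ALG and at}).
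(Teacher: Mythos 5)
Your proof is correct and fills in exactly the argument the paper leaves implicit: the paper declares this lemma ``immediate from the definition of active states'' and supplies no proof, and your two-part argument (disjointness via the least common ancestor, and the sum formula via the tiling of $[0,\alg(h)]$ by the intervals $[\start(w_j),\endd(w_j)]$) is the natural elaboration. Your observation that at the finitely many transition times $t=\endd(w_1)=\start(w_2)$ a node and its child are both active (so disjointness technically fails there) is a genuine subtlety the paper glosses over, and your resolution --- that this measure-zero set is harmless because the lemma is only ever used under integrals in $t$ --- is the right one.
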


\begin{definition}[Score] \label{score}For any time $t$ and active state $w\in \act_t$, we define 
    $$
\mathrm{Score}(t,w) = \max_{e\in E} \frac{\Delta(e|w)}{c(e) ( V_w -1)} .
$$

Moreover, for any time $t$, define the random quantity 

$$\st_t =\left\{ \begin{array}{ll}
     0 & \mbox{if greedy terminates before time $t$} \\
     \mathrm{Score}(t,w) & \mbox{where $w\in \act_t$ is the state at time $t$ in greedy} 
\end{array} \right. . $$
\end{definition}
Note that the score is  a scaled version of the greedy criterion.  
Our main technical result is the following lower bound. 

\begin{lemma}\label{lemma-lower bnd}
For any time $t$ and $L\ge 0$, we have  
$$
\E_{h^*\sim D }[\st_t] \geq   \frac{L}{2t} (a_t - o_{t/L}) . 
$$
\end{lemma}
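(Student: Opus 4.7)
The plan is to decompose both sides of the inequality over active states at time $t$, then reduce to a pointwise bound at each such state. By Lemma~\ref{partition},
$$
\E_{h^*\sim D}[\st_t] \;=\; \sum_{w\in \act_t} p(V_w)\cdot \mathrm{Score}(t,w).
$$
For the right-hand side, introduce $M_w := \{h \in V_w : C(\pi^*, h) \le t/L\}$, the hypotheses compatible with greedy's state $w$ that $\pi^*$ identifies within cost $t/L$. Since the version spaces $\{V_w\}_{w\in\act_t}$ partition the event $\{\alg \ge t\}$, a direct inclusion-exclusion gives
$$
a_t - o_{t/L} \;=\; \Pr[\alg \ge t,\, \opt < t/L] - \Pr[\alg < t,\, \opt \ge t/L] \;\le\; \sum_{w\in\act_t} p(M_w),
$$
so it suffices to establish the pointwise inequality $\mathrm{Score}(t, w) \ge \tfrac{L}{2t}\cdot\tfrac{p(M_w)}{p(V_w)}$ for every $w\in\act_t$.

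For each active state $w$, I plan to construct a sequence $\Stem(w)$ of queries drawn from $\pi^*$ satisfying (i) total cost $\sum_{e\in \Stem(w)} c(e)\le t/L$, and (ii) aggregate fresh elimination at state $w$,
$$
\sum_{e\in \Stem(w)}\Delta(e\mid w) \;\ge\; \frac{p(M_w)(V_w-1)}{2\,p(V_w)}.
$$
Given (i) and (ii), an averaging argument combined with greedy's maximization yields
$$
\mathrm{Score}(t,w)\cdot(V_w-1) \;=\; \max_{e\in E}\frac{\Delta(e\mid w)}{c(e)} \;\ge\; \frac{\sum_{e\in \Stem(w)}\Delta(e\mid w)}{\sum_{e\in \Stem(w)} c(e)} \;\ge\; \frac{L\,p(M_w)(V_w-1)}{2\,t\,p(V_w)},
$$
and dividing by $V_w - 1$ recovers the pointwise bound; summing over $w\in\act_t$ completes the proof.

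The construction of $\Stem(w)$ is the crux of the argument. A natural starting point is the pair-separation identity
$$
\sum_{e\in \Stem(w)}\Delta(e\mid w) \;=\; \frac{1}{p(V_w)}\sum_{h, h'\in V_w} p_h\cdot\bigl|\{e\in\Stem(w): e(h)\ne e(h')\}\bigr|,
$$
together with the observation that if $\Stem(w)$ contains the entire root-to-leaf path $P_h$ of some $h\in M_w$, then $\pi^*$ isolates $h$ at its leaf within cost $\le t/L$, so every $h'\in V_w\setminus\{h\}$ is distinguished from $h$ by at least one query in $P_h\subseteq \Stem(w)$, contributing $p_h\,(V_w-1)$ to the sum. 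The naive single-path choice $\Stem(w) = P_{h_0}$ satisfies (i) but only yields elimination proportional to $p_{h_0}$, whereas the naive union $\bigcup_{h\in M_w} P_h$ easily satisfies (ii) but inflates the cost by up to a factor of $|M_w|$.

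The main obstacle, therefore, is designing $\Stem(w)$ that simultaneously meets (i) and (ii); my approach would exploit the shared prefixes of the paths $P_h$ inside $\pi^*$, extracting either a heavy path or a carefully pruned subtree of $\pi^*$ under the conditional distribution $D\mid w$ restricted to $M_w$. The factor of $2$ in the lemma corresponds to a halving loss incurred by this extraction (for example, by choosing the heavier of two branches at a key juncture, so that a $\ge 1/2$ fraction of the $M_w$-mass is ``fully isolated'' by $\Stem(w)$). Verifying the cost and elimination bounds for this construction is where the technical effort concentrates; once that is in place, the averaging and summation steps above deliver the lemma.
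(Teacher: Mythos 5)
Your skeleton is the same as the paper's: decompose $\E[\st_t]$ over active states via Lemma~\ref{partition}, charge $a_t - o_{t/L}$ to the per-state probability mass that the optimal tree resolves by cost $t/L$, and lower-bound $\mathrm{Score}(t,w)$ by averaging the greedy criterion over a cheap query set $\mathsf{Stem}(w)$ extracted from $\pi^*$. The reduction to the pointwise claim $\mathrm{Score}(t,w)\ge \frac{L}{2t}\,p(M_w)/p(V_w)$ is valid, and that claim is true (it follows from the paper's Lemmas~\ref{keylemma 1} and \ref{lemma:I_w}, since $M_w\subseteq V_w\setminus I_w$ whenever $|I_w|\ge 2$). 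But the proof is incomplete exactly where you say the crux lies: you never construct $\mathsf{Stem}(w)$ or prove the elimination bound (ii), and the construction you gesture at would not work. A single root path of $\pi^*$ reaches at most one leaf, so it can ``fully isolate'' at most one hypothesis; when $M_w$ contains several hypotheses of comparable probability, no path of cost $t/L$ isolates half the $M_w$-mass, and a pruned subtree reintroduces the cost blow-up you already identified. The pair-separation viewpoint, which credits a query only for separating pairs and insists on isolating members of $M_w$, points in the wrong direction.

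The idea that closes the gap is of a different kind: no hypothesis needs to be isolated. Take $\mathsf{Stem}(w)$ to be the single root path of $\pi^*$ obtained by always following the branch containing the largest-\emph{cardinality} part $B_w^e$ of $V_w$, truncated at cost $t/L$. For every non-heavy part $S_i$ one has $V_w - S_i \ge V_w/2$, so each query $e$ on the path already contributes $\frac{\Delta(e|w)}{V_w-1}\ge \frac{p(J_w^e)}{2p(V_w)} + \frac{p(B_w^e)}{p(V_w)}\cdot\frac{J_w^e}{V_w-1}$ (Lemma~\ref{score low bnd}); this elimination count, not a mass-halving at a juncture, is where the factor $2$ comes from. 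Summing along the path, the sets $J_w^e$ cover $V_w\setminus I_w$, which gives the aggregate bound in terms of $p(V_w)-p(I_w)$; the case $|I_w|\le 1$ additionally needs the cardinality count $\sum_e J_w^e \ge V_w-1$ to reach the constant $\tfrac12$; and the link to $o_{t/L}$ goes through the survivors $I_w$ (if $|I_w|\ge 2$, none of them is identified by cost $t/L$), which is the counterpart of your $M_w$. Without this construction, or an equivalent one, the lemma is not proved.
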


Before proving this lemma, we introduce some intermediate results. Below, we fix time $t$ and $L$ as in the lemma.

\begin{definition}[Heavy part]\label{def:heavy part}
Fix a state $w\in \act_t$ and a query $e$. Let $B_w^e$ denote the part in $\{S_i(e,w):i\in R\}$ with the largest size. We call $B_w^e$ the heavy part of $V_w$ with respect to $e$; also define $J_w^e = V_w \setminus B_w^e$.
\end{definition}

\def\Stem(w){\mathsf{Stem}(w)}

In the following, we will associate an active state $w$ with a particular path $\Stem(w)$ in the optimal decision tree. We will use this association to argue that our greedy policy is competitive with the optimal policy.

\paragraph{Finding a path in the optimal tree.} Given an active state $w \in \act_t$, we construct a path $\Stem(w)$ in the optimal tree starting at the root as follows. At any node labeled with query $e$, we follow the branch corresponding to the heavy part $B_w^e$. This process continues as long as the total cost of queries on the path remains \emph{at most} $\frac{t}{L}$. With a slight abuse of notation, let $\Stem(w)$ denote the set of queries along this path. By construction, we have the total cost in $\Stem(w)$ is:
\begin{equation}\label{eq:I}
\sum_{e \in \Stem(w)} c(e) \leq \frac{t}{L}.
\end{equation}

\begin{lemma}\label{score low bnd}
For any state $w \in \act_t$ and $e \in \Stem(w)$, we have
$$
\frac{\Delta(e|w)}{V_w-1} \,\,\geq\,\, \frac{p(J_w^e)}{2 p(V_w)} \,+\, \frac{p(B_w^e)}{p(V_w)}\frac{J_w^e}{V_w-1}.
$$     
\end{lemma}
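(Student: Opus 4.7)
The plan is to expand $\Delta(e|w)$ using its definition in~\eqref{eq:delta def} and exploit the fact that $B_w^e$ is the heaviest part to force each of the remaining parts to be small. Let $i^\star \in R$ be the response with $S_{i^\star}(e,w) = B_w^e$, and abbreviate $B = B_w^e$, $J = J_w^e = V_w \setminus B$. I would first split
\[
\Delta(e|w) \;=\; \frac{p(B)}{p(V_w)}(V_w - B) \;+\; \sum_{i \neq i^\star}\frac{p(S_i)}{p(V_w)}(V_w - S_i) \;=\; \frac{p(B)}{p(V_w)}\,J \;+\; \sum_{i \neq i^\star}\frac{p(S_i)}{p(V_w)}(V_w - S_i),
\]
so that after dividing by $V_w - 1$ the first summand already accounts for the $\frac{p(B)}{p(V_w)}\cdot\frac{J}{V_w-1}$ term on the right-hand side of the claim.

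The main step is then to lower bound the remaining sum by $\frac{(V_w-1)\,p(J)}{2\,p(V_w)}$. For this I would argue that every non-heavy part is small: for each $i \neq i^\star$ we have $|S_i| \leq |B|$ by the defining maximality of the heavy part, and also $|S_i| \leq |J|$ since $S_i \subseteq J$. Because $|B| + |J| = |V_w|$, this gives $|S_i| \leq \min(|B|,|J|) \leq |V_w|/2$, so $V_w - S_i \geq V_w/2 \geq (V_w - 1)/2$. Pulling this uniform bound out of the sum and using $\sum_{i \neq i^\star} p(S_i) = p(J)$ yields the desired inequality; dividing through by $V_w - 1$ produces the statement.

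The only non-mechanical ingredient is the observation $\min(|B|,|J|) \leq |V_w|/2$, which is immediate from $B$ being the heavy part, so there is no real obstacle. I would also note that the hypothesis $e \in \Stem(w)$ is not actually used in proving this lemma: it becomes essential only when one sums these per-query bounds along the stem in the proof of Lemma~\ref{lemma-lower bnd}, where the total-cost constraint~\eqref{eq:I} is invoked to control how long the path is allowed to be.
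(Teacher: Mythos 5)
Your proposal is correct and follows essentially the same route as the paper: both decompose $\Delta(e|w)$ into the heavy-part term plus the rest, and both reduce to the key observation that every non-heavy part $S_i$ has size at most $V_w/2$, so $V_w - S_i \geq V_w/2 \geq (V_w-1)/2$ (the paper justifies $S_i \le V_w/2$ via $S_i \le \tfrac{S_1+S_i}{2} \le \tfrac{V_w}{2}$, which is the same fact as your $\min(|B_w^e|,|J_w^e|) \le V_w/2$). Your side remark that the hypothesis $e \in \Stem(w)$ is not actually needed here is also accurate.
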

\begin{proof}
Fix $w\in \act_t$ and $e \in \Stem(w)$. Suppose that $e$ partitions $V_w$ into $q$ disjoint parts, i.e.,
$V_w = \bigcup_{i=1}^q S_i$. 
Without loss of generality, assume that $S_1$ has the largest size. So we have $B_w^e = S_1$ and $J_w^e = \bigcup_{i\geq2}S_i$. Then,
\begin{equation}\label{eq:half_V}
    V_w - S_i \stackrel{}{\geq} V_w - \frac{S_1 + S_i}{2} \stackrel{}{\geq} V_w - \frac{V_w}{2} = \frac{V_w}{2}, \quad \forall i \geq 2.
\end{equation}

Above, the first inequality follows because $S_1 \geq S_i$ (i.e., $S_1$ has the largest size). The second inequality follows because $S_1$ and $S_i$ are disjoint, which implies $S_1 + S_i \leq V_w$.
Now, we have

\begin{align}
\frac{\Delta(e|w)}{V_w-1} &\stackrel{}{=} \frac{1}{V_w-1}\sum_{ i=1}^q  \frac{p(S_i)}{p(V_w)}\left(V_w - S_i\right) \label{eq:a} \\
&= \frac{p(B_w^e)}{p(V_w)} \frac{V_w - S_1}{V_w-1} + \frac{1}{V_w-1}\sum_{i=2}^q  \frac{p(S_i)}{p(V_w)} \left(V_w - S_i\right) \nonumber \\
&\stackrel{}{\geq} \frac{p(B_w^e)}{p(V_w)} \frac{V_w - S_1}{V_w-1} + \frac{V_w}{2(V_w-1)} \sum_{i=2}^q  \frac{p(S_i)}{p(V_w)} \label{eq:b}  
 \stackrel{}{=} \frac{p(B_w^e)}{p(V_w)} \frac{J_w^e}{V_w-1} + \frac{p(J_w^e)}{2p(V_w)}. 
\end{align}
The equality in \eqref{eq:a} uses the definition of $\Delta(e|w)$ from Equation~\eqref{eq:delta def}. The inequality in \eqref{eq:b} follows from Equation~\eqref{eq:half_V}. Finally, the equality in \eqref{eq:b} is due to the definition $J_w^e = \bigcup_{i\geq2}S_i = V_w \setminus S_1$. 
\end{proof}

\begin{definition}\label{terminalstate}
For any active state $w\in\act_t$, we define:
\begin{equation}\label{V and I} 
  I_w := V_w \backslash \bigcup_{e \in \Stem(w)} J_w^e.
\end{equation}
 We further partition the  active states $\act_t$ based on this. Define 
    \begin{equation}\label{def:pos act}
            \act_t^-  := \{ w \in \act_t | \, |I_w| \leq 1 \} \quad \text{and} \quad \act_t^+ := \act_t \backslash \act_t^- = \{w \in \act_t | \, |I_w| \geq 2 \}. 
    \end{equation}
 
\end{definition}

In other words, $I_w\sse V_w$ are those hypotheses in $V_w$ that are compatible at the end of $\Stem(w)$ in the optimal policy. Moreover, if $w\in \act_t^+$ then the optimal policy does {\em not} identify any $h\in I_w$ before time $\frac{t}{L}$. (For $w\in \act_t^-$ there is a unique hypothesis  $h\in I_w$ which means that the optimal policy   identifies $h$  before time $\frac{t}{L}$.)  
 In particular, only the hypotheses corresponding to $\act_t^+$ help us establish a lower bound on the optimal cost. We formalize this in the following lemma.

\begin{lemma}\label{lemma:I_w}
For any $w \in \act_t^+$, we have
$$\sum_{w \in \act_t^+}p(I_w) \leq \Pr\left[\opt \geq \frac{t}{L}\right] = o_{t/L}.$$
\end{lemma}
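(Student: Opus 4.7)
The strategy is to show that whenever $h \in I_w$ for some $w \in \act_t^+$ and $h^* = h$, the optimal policy is forced to incur cost strictly greater than $t/L$. Combined with disjointness of the $I_w$'s, this immediately gives the claimed bound.

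First I would check disjointness: since $I_w \sse V_w$ and Lemma~\ref{partition} gives that $\{V_w : w \in \act_t\}$ is a disjoint family, so is $\{I_w : w \in \act_t^+\}$. Hence $\sum_{w \in \act_t^+} p(I_w)$ equals the probability that $h^*$ lands in $\bigcup_{w \in \act_t^+} I_w$, so it suffices to show that for every such $h$ the optimal policy's trajectory under $h^* = h$ has total cost at least $t/L$.

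Next I would unpack $I_w$ for a fixed $w \in \act_t^+$ and $h \in I_w$. By definition $h \in V_w$ and $h \notin J_w^e$ for every $e \in \Stem(w)$, equivalently $h \in B_w^e$ for each such $e$. This is exactly the statement that $h$'s responses to the queries on $\Stem(w)$ agree with the heavy-branch responses used to build the path. Therefore, under $h^* = h$, the optimal policy's execution begins by performing all the queries on the path $\Stem(w)$. Let $u$ be the optimal-tree node reached just after. Every hypothesis in $I_w$ is compatible at $u$, and since $|I_w| \geq 2$ (the defining property of $\act_t^+$), the node $u$ cannot be a leaf, so the optimal policy must issue at least one further query $e^*$.

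Finally, by the stopping rule for $\Stem(w)$, the query $e^*$ was omitted from $\Stem(w)$ precisely because $\sum_{e \in \Stem(w)} c(e) + c(e^*) > t/L$. Thus the trajectory for $h^* = h$ has cost at least $\sum_{e \in \Stem(w)} c(e) + c(e^*) > t/L$, placing $h$ in the event $\{\opt \geq t/L\}$. Summing over the disjoint $I_w$'s yields the inequality. The only subtle point I foresee is ruling out that $\Stem(w)$ terminated at a leaf rather than by the cost bound when $w \in \act_t^+$, but this is immediate: a leaf of the optimal tree has a singleton compatible set, incompatible with $|I_w| \geq 2$.
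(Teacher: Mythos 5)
Your proof is correct and follows the same route as the paper: disjointness of the $I_w$'s (inherited from the $V_w$'s) plus the observation that no hypothesis in $\bigcup_{w\in\act_t^+} I_w$ is identified by the optimal policy before time $t/L$. The paper asserts that second fact in one sentence, whereas you supply the full justification (heavy-branch agreement forces the optimal execution along $\Stem(w)$, $|I_w|\ge 2$ rules out a leaf, and the stopping rule forces the next query to push the cost past $t/L$), which is a welcome elaboration rather than a different argument.
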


\begin{proof}
We can bound the sum of probabilities as follows:
$$
\sum_{w \in \act_t^+}p(I_w) \stackrel{}{=} p\left(\bigcup_{w \in \act_t^+} I_w\right) \stackrel{}{\leq} \Pr\left(\opt \geq \frac{t}{L}\right) = o_{\frac{t}{L}}.
$$ 

For the first equality, note that the sets $\{I_w\}_{w \in \act_t}$ are disjoint since $I_w \subseteq V_w$ for each $w \in \act_t$ (Definition~\ref{terminalstate}) and the sets $\{V_w\}_{w \in \act_t}$ are disjoint by Lemma~\ref{partition}. For the inequality, we use that none of the hypotheses in $\bigcup_{w\in \act^+_t} I_w$ is identified before time $t/L$ in the optimal policy.  
\end{proof} 

\begin{lemma}\label{keylemma 1}
For any time $t$, 
$$
\mathrm{Score}(t,w) \geq
\begin{cases}
\frac{L}{2t}\frac{p(V_w) - p(I_w)}{p(V_w)} & \text{if } w \in \act_t^+ \\
\frac{L}{2t} & \text{if } w \in \act_t^- 
\end{cases} .
$$
\end{lemma}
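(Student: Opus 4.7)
The plan is to use the fact that $\mathrm{Score}(t,w)$ dominates the score of \emph{every} query, evaluated at state $w$, in order to compare it against a weighted sum over queries in $\Stem(w)$. Concretely, by Definition~\ref{score}, for every $e\in E$ we have $\mathrm{Score}(t,w)\cdot c(e)\ge \Delta(e|w)/(V_w-1)$. Summing this over $e\in \Stem(w)$ and invoking the cost bound $\sum_{e\in \Stem(w)}c(e)\le t/L$ from Equation~\eqref{eq:I}, I get
$$
\mathrm{Score}(t,w)\cdot \frac{t}{L} \;\ge\; \sum_{e\in \Stem(w)}\frac{\Delta(e|w)}{V_w-1}.
$$
Substituting the per-query lower bound from Lemma~\ref{score low bnd} into the right-hand side produces a sum of two pieces: a probabilistic piece $\sum_e p(J_w^e)/(2p(V_w))$ and a cardinality-weighted piece $\sum_e (p(B_w^e)/p(V_w))\cdot (J_w^e/(V_w-1))$. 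The rest of the argument is to lower bound this sum depending on whether $w\in \act_t^+$ or $w\in \act_t^-$.

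For $w\in \act_t^+$, only the probabilistic piece is needed. The key observation is that $I_w = V_w\setminus \bigcup_{e\in \Stem(w)} J_w^e$ is exactly the definition in \eqref{V and I}, hence $\bigcup_{e\in\Stem(w)}J_w^e = V_w\setminus I_w$. A union bound in probability measure then yields $\sum_{e\in \Stem(w)} p(J_w^e)\ge p(V_w)-p(I_w)$, which after dividing by $2p(V_w)$ and combining with the displayed inequality above proves the $\act_t^+$ bound.

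For $w\in \act_t^-$ the probabilistic piece alone is insufficient, since $p(I_w)$ may be nearly $p(V_w)$ (even though $|I_w|\le 1$). Here I would retain both pieces. For the cardinality-weighted piece, I would first observe that $I_w = \bigcap_{e\in \Stem(w)} B_w^e$, so $I_w\subseteq B_w^e$ for every $e\in \Stem(w)$ and therefore $p(B_w^e)\ge p(I_w)$. Coupled with the \emph{counting} union bound $\sum_{e\in \Stem(w)}|J_w^e|\ge |V_w\setminus I_w|\ge |V_w|-1$ (using $|I_w|\le 1$), this gives $\sum_e (p(B_w^e)/p(V_w))\cdot(J_w^e/(V_w-1))\ge p(I_w)/p(V_w)$. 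Adding the probabilistic piece $(p(V_w)-p(I_w))/(2p(V_w))$ yields $(p(V_w)+p(I_w))/(2p(V_w))\ge 1/2$, which is exactly what is needed.

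The main obstacle is the $\act_t^-$ case: the natural probability-measure union bound is too lossy when $I_w$ carries most of the mass of $V_w$, and one must exploit the structural fact that the surviving singleton (if any) lies inside every heavy part $B_w^e$, together with a cardinality-based union bound on $\sum |J_w^e|$. Marrying these two inequalities is the delicate part of the proof; the $\act_t^+$ case and the opening reduction are essentially bookkeeping.
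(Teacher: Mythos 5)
Your proposal is correct and follows essentially the same route as the paper: the opening reduction (your term-by-term inequality $\mathrm{Score}(t,w)\cdot c(e)\ge \Delta(e|w)/(V_w-1)$ summed over $\Stem(w)$ is just a rephrasing of the paper's averaging step), the use of Lemma~\ref{score low bnd}, the probability union bound for $\act_t^+$, and the combination of $I_w\subseteq B_w^e$ with the cardinality bound $\sum_{e}|J_w^e|\ge |V_w|-1$ for $\act_t^-$ all match the paper's argument.
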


\begin{proof}  

Recall that $\mathrm{Score}(t,w) = \max\limits_{e\in E}\frac{\Delta(e|w)}{(V_w-1) c(e)}$ from the greedy criterion. We will prove the lower bound for the $\mathrm{Score}(t,w)$ by an averaging argument. That is 
$$
\mathrm{Score}(t,w) \geq \max_{e\in \Stem(w)} \frac{\Delta(e|w)}{c(e) (V_w-1) } \ge   \frac{1}{\sum\limits_{e\in \Stem(w)} c(e)}\sum_{e\in \Stem(w)}\frac{\Delta(e|w)}{V_w-1} \geq \frac{L}{t}\sum_{e\in \Stem(w)}\frac{\Delta(e|w)}{V_w-1}.
$$
The last inequality uses equation~\eqref{eq:I}.
It now suffices to show that 
\begin{equation}\label{eq:4}
\sum_{e\in \Stem(w)}\frac{\Delta(e|w)}{V_w-1} \geq
\begin{cases}
\frac{p(V_w) - p(I_w)}{2p(V_w)} & \text{if } w \in \act_t^+ \\
\frac{1}{2} & \text{if } w \in \act_t^-
\end{cases}.     
\end{equation}
We have
\begin{align}
\sum_{e\in \Stem(w)}\frac{\Delta(e|w)}{V_w-1} &\stackrel{}{\geq} \sum_{e \in \Stem(w)} \left(\frac{p(J_w^e)}{2 p(V_w)} + \frac{p(B_w^e)}{p(V_w)}\frac{J_w^e}{V_w-1}\right) \label{eq:1}\\
&\stackrel{}{\geq} \frac{p(V_w) - p(I_w)}{2p(V_w)} + \sum_{e \in \Stem(w)}\frac{p(B_w^e)}{p(V_w)} \frac{J_w^e}{V_w-1} \label{eq:2} \\
&\stackrel{}{\geq} \frac{p(V_w) - p(I_w)}{2p(V_w)} + \sum_{e \in \Stem(w)}\frac{p(I_w)}{p(V_w)} \frac{J_w^e}{V_w-1} := RHS 
 \label{eq:3}.
\end{align}

The inequality in \eqref{eq:1} follows from Lemma~\ref{score low bnd}. For the inequality in \eqref{eq:2}, we use the definition of $I_w$, see~\eqref{V and I}, which implies that
$$
\sum_{e \in \Stem(w)} p(J_w^e) \geq p(V_w) - p(I_w).
$$ 
The inequality in \eqref{eq:3} holds because $I_w \subseteq B_w^e$ for any $w \in \act_t$ and $e \in \Stem(w)$. 

For any $w \in \act_t^-$, we have $|I_w| \leq 1$, which implies $\sum\limits_{e \in \Stem(w)} J_w^e \geq V_w - 1$. Hence, 
$$
RHS = \frac{p(V_w) - p(I_w)}{2p(V_w)} + \frac{p(I_w)}{p(V_w)} \sum_{e \in \Stem(w)} \frac{J_w^e}{V_w-1} \geq \frac{p(V_w) - p(I_w)}{2p(V_w)} + \frac{p(I_w)}{p(V_w)}  \geq \frac{1}{2},\quad \forall w\in \act^-_t.
$$

For any $w \in \act^+$, we only keep the first term in \eqref{eq:3}. This proves \eqref{eq:4} and hence completes the proof of the lemma.
\end{proof}

\begin{proof}[Proof of Lemma~\ref{lemma-lower bnd}]
We are now ready to provide a lower bound for $\mathop{\mathbb{E}}\limits_{h^*\sim D}[\st_t]$. Recall that $\st_t$ is a random variable from Definition~\ref{score}. We have:
\begin{align}
\mathop{\mathbb{E}}_{h^*\sim D}[\st_t] &= \Pr[ \alg < t] \cdot 0 + \Pr[\alg \geq t]\cdot  \mathop{\mathbb{E}} \left[\st_t| \alg \geq t \right] \nonumber \\
&= \Pr[\alg \geq t] \sum_{w \in \act_t} \mathrm{Score}(t,w)\frac{\Pr[\text{\{$w$ observed at $t$\} $\cap$ \{$\alg \geq t$\}}]}{\Pr[ \alg \geq t]} \nonumber \\
&= \sum_{w \in \act_t} \mathrm{Score}(t,w) \Pr[\text{$w$ observed at $t$}]  = \sum_{w \in \act_t} \mathrm{Score}(t,w)\cdot p(V_w) \label{eq:reduce to act}.
\end{align}

The first equality in $\eqref{eq:reduce to act}$ holds because if state $w$ is observed  at time $t$ then the greedy algorithm won't terminate before time $t$, i.e., $\text{\{$w$ observed at $t$\}} \subseteq \{\alg \geq t\}$. For the last equality, we observe that $p(V_w)$ is exactly the probability that $w$ is observed at time $t$. 
Now,
\begin{align}
\sum_{w \in \act_t} \mathrm{Score}(t,w)p(V_w) &= \sum_{w \in \act_t^+} \mathrm{Score}(t,w) p(V_w) + \sum_{w \in \act_t^-} \mathrm{Score}(t,w) p(V_w) \nonumber \\
&\stackrel{}{\geq} \frac{L}{2t} \left(\sum_{w \in \act_t^+} \frac{p(V_w) - p(I_w)}{p(V_w)} \cdot p(V_w) + \sum_{w \in \act_t^-}p(V_w)\right) \label{eq:5} \\
&= \frac{L}{2t} \left(\sum_{w \in \act_t^+} \left(p(V_w) - p(I_w)\right) + \sum_{w \in \act_t^-}p(V_w)\right) \nonumber \\
&= \frac{L}{2t} \left(\sum_{w \in \act_t} p(V_w) - \sum_{w \in \act_t^+} p(I_w) \right) \quad  \stackrel{}{\geq} \quad  \frac{L}{2t} (a_t - o_{\frac{t}{L}}) . \label{eq:6} 
\end{align}
The inequality in \eqref{eq:5} is by  Lemma~\ref{keylemma 1}. For the inequality in \eqref{eq:6}, we use two results:
  $\sum\limits_{w \in \act_t} p(V_w) = a_t$ by Lemma~\ref{partition} and   $\sum\limits_{w \in \act_t^+} p(I_w) \leq o_{t/L}$ by  Lemma~\ref{lemma:I_w}. 
\end{proof}

We now turn to proving an upper bound on the score, which is almost identical to a similar lemma in \cite{althani2024}.
We provide the proof for completeness. 
\begin{lemma}\label{lem:upper-bnd}
For any $t \geq 0$, we have the upper bound
$$
\int_{z\geq t} \quad \mathop{\mathbb{E}}_{h^* \sim D}[\st_z] dz \leq (1+ \ln m) a_t.
$$
\end{lemma}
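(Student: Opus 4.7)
The plan is to decompose $\int_{z\ge t}\E[\st_z]\,dz$ into contributions from individual internal nodes of the greedy decision tree, group those nodes into disjoint subtrees rooted at the active states $\act_t$, and then apply the classical harmonic-sum telescoping of greedy set cover along each hypothesis' root-to-leaf path.

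First, I would observe that $\mathrm{Score}(z,w)$ depends only on $w$ (not on $z$), and that the maximum in its definition is realized by greedy's chosen query $e_w$, so $\mathrm{Score}(z,w)=\frac{\Delta(e_w|w)}{c(e_w)(V_w-1)}$. Writing $\E_{h^*\sim D}[\st_z]=\sum_{w\in\act_z}p(V_w)\,\mathrm{Score}(z,w)$ and swapping the order of integration and summation, each internal node $w$ with $\endd(w)>t$ contributes to $\int_{z\ge t}\E[\st_z]\,dz$ the quantity $p(V_w)\,\mathrm{Score}(z,w)\cdot\bigl(\endd(w)-\max(t,\start(w))\bigr)$, which is bounded above by
\[
p(V_w)\cdot\mathrm{Score}(z,w)\cdot c(e_w)\ =\ p(V_w)\cdot\frac{\Delta(e_w|w)}{V_w-1}.
\]

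Next, I would partition the internal nodes $w$ with $\endd(w)>t$ by their unique ancestor in $\act_t$: examining the cumulative-cost sequence along the root-to-$w$ path shows that either $w\in\act_t$ itself or $w$ descends from exactly one $r\in\act_t$. This yields disjoint subtrees $\{T_r:r\in\act_t\}$, and it then suffices to prove
\[
\sum_{w\in T_r,\text{ internal}} p(V_w)\cdot\frac{\Delta(e_w|w)}{V_w-1}\ \le\ (1+\ln m)\,p(V_r)\qquad\text{for every }r\in\act_t,
\]
because summing this bound over $r\in\act_t$ and applying Lemma~\ref{partition} yields the desired $(1+\ln m)\,a_t$.

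To establish the per-subtree inequality, I would expand $p(V_w)\Delta(e_w|w)=\sum_i p(S_i(e_w,w))(V_w-S_i(e_w,w))$ and interchange the sum over $(w,i)$ with a sum over hypotheses $h\in V_r$, noting that the unique child $w'$ of $w$ on $h$'s path satisfies $V_{w'}=S_{e_w(h)}(e_w,w)$. Letting $V_r=N_0>N_1>\cdots>N_J=1$ be the sequence of $|V_w|$-values on $h$'s root-to-leaf path in $T_r$, the contribution attributed to $h$ telescopes:
\[
p_h\sum_{j=0}^{J-1}\frac{N_j-N_{j+1}}{N_j-1}\ \le\ p_h\sum_{j=0}^{J-1}\sum_{k=N_{j+1}+1}^{N_j}\frac{1}{k-1}\ =\ p_h\sum_{k=1}^{V_r-1}\frac{1}{k}\ \le\ p_h(1+\ln m),
\]
where the first inequality uses $k-1\le N_j-1$. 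Summing over $h\in V_r$ yields the per-subtree bound. The main obstacle is the hypothesis-wise accounting in this last step: one must carefully swap $\sum_{w\in T_r}\sum_i$ with $\sum_{h\in V_r}\sum_{w\text{ on }h\text{'s path in }T_r}$ and verify that the intra-hypothesis sum truly telescopes; the remaining pieces (truncation at $t$, decomposition into subtrees rooted at $\act_t$, and the harmonic inequality) are standard and parallel the set-cover-style argument in \cite{althani2024}.
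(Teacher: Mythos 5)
Your proof is correct and rests on the same key step as the paper's: the per-hypothesis telescoping harmonic bound $\sum_i \frac{V_{w_i}-V_{w_{i+1}}}{V_{w_i}-1}\le 1+\ln m$ along each root-to-leaf path of the greedy tree. The only difference is bookkeeping --- you sum node-by-node and group the nodes into disjoint subtrees rooted at the active states $\act_t$, whereas the paper applies Fubini to integrate the score path-by-path via the random variable $G_{h^*}(z)$ and extracts the factor $a_t$ by conditioning on non-termination before time $t$ --- and the two accountings are equivalent.
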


\begin{proof}
For any state $w$, query $e$  and hypothesis $h\in H$, let $\Delta_h(e|w)=V_w-S_i(e,w)$ where $i=e(h)$ is the response of $e$ under $h$.  So, $\Delta(e|w)=\E_{h\sim D|w}[\Delta_h(e|w)]$.
We now express  the score in terms of the true hypothesis. For any time $z$ and hypothesis $h$, define
$$G_h(z) :=\left\{ \begin{array}{ll}
     0 & \mbox{if greedy terminates before time $z$ when true hypothesis is $h$} \\
     \frac{\Delta_h(e|w)}{c(e)(V_w-1)} & \substack{  \mbox{where $w\in \act_z$ is the state at time $z$ in greedy}\\ \mbox{ when true hypothesis is $h$, and $e$ is the query at state $w$} }  
\end{array} \right. . $$
Using Definition~\ref{score}, we obtain  $\E[\st_z] = \E_{h\sim D}[G_h(z)]$ for any time $z$. Hence,
\begin{align}
&\int_{z\geq t}  \E[\st_z] dz  = \int_{z\ge t} \E_{h^* }[ G_{h^*}(z)] dz = \E_{h^*}\left[\int_{z\ge t} G_{h^*}(z) dz\right] \notag\\
&= \Pr[\mbox{greedy terminates after time $t$}] \cdot \E_{h^*}\left[\int_{z\ge t} G_{h^*}(z) dz \big| \mbox{greedy terminates after time $t$}\right].\label{eq:score-ub-cases}
\end{align}
The equality in \eqref{eq:score-ub-cases} uses the fact that if greedy terminates before $t$ then $G_{h*}(z)=0$ for all $z\ge t$. 

  We will now show that 
  \begin{equation}\label{eq:harmonic}
\int_{z\ge 0} G_{h}(z) dz \le 1+\ln m,\quad \forall h\in H.      
  \end{equation}
  Fix any $h\in H$. 
Let $e_1, \dots, e_n$ be the query sequence in the greedy tree conditioned on $h^* = h$. For each $i\in [n]$   let $w_i$ be the state at which query $e_i$ is performed; let  $w_{n+1}$ be the state associated with the leaf node for $h$. The time interval during which state $w_i$ is active is $[\sum_{j=1}^{i-1} c(e_j) , \sum_{j=1}^{i} c(e_j)]$; note that the width of this interval  is $c(e_i)$. Moreover, $G_h(z)$ is a step function w.r.t.  $z$, and 
$$\int_{z\ge 0} G_{h}(z) dz =\sum_{i=1}^n c(e_i)\cdot  \frac{\Delta_h(e_i|w_i)}{c(e_i)(V_{w_i}-1)}  =  \sum_{i=1}^n \frac{V_{w_i} - V_{w_{i+1}}}{V_{w_i}-1}  \leq  1 + \ln m.
$$
To justify the last inequality, we first observe that $\{V_{w_i}\}_{i=1}^{n+1}$ is a strictly decreasing sequence of integers with values in $[1,m]$. We then use the following fact.
\begin{quote}
If  $\{a_i\}_{i=1}^{n+1}$ is a strictly decreasing sequence of integers with values in $[1, m]$ then: 
\[
\sum_{i=1}^{n} \frac{a_i - a_{i+1}}{a_i - 1} \le 1+ \ln m.
\]
\end{quote}
The proof is now complete by combining \eqref{eq:score-ub-cases} and \eqref{eq:harmonic}.
\end{proof}

We are ready to prove our main theorem.
\begin{proof}[Proof of Theorem~\ref{thm:main}]
Recall $\E\left[\alg\right] =\int_0^{\infty} a_t dt $ and $\E\left[\opt\right]=\int_0^{\infty} o_t dt$ from Equation~\eqref{eq:ALG and at}. So,
$$
\begin{aligned}
   & (1+ \ln m)\E\left[\alg\right]\stackrel{\text{(Eq.~\eqref{eq:ALG and at})}}{=} (1 + \ln m ) \int_0^{\infty} a_t dt \stackrel{\text{(Lemma~\ref{lem:upper-bnd})}}{\geq} \int_0^{\infty}\int_{z\geq t} \E_{h^* \sim D}[\st_z] dzdt \\
   &\stackrel{(\text{Lemma~\ref{lemma-lower bnd}})}{\geq} \int_0^{\infty} \int_{z\geq t} \frac{L}{2z} (a_z - o_{\frac{z}{L}}) dz dt  \stackrel{(\text{Fubini's Theorem})}{=}  \int_0^{\infty} \int_0^z  \frac{L}{2z} (a_z - o_{\frac{z}{L}}) dt dz \\
   &= \int_0^{\infty} \frac{L}{2} (a_z -o_{\frac{z}{L}} )dz \stackrel{(\text{Equation~\eqref{eq:ALG and at}})}{=}  \frac{L}{2} (\E\left[\alg\right] -  L \cdot \E\left[\opt\right]).
\end{aligned}
$$
Now, choosing $L = 4\cdot(1+\ln m)$, we get  $8 \cdot (1+\ln m) \E\left[\opt\right] \geq \E\left[\alg\right]$.
\end{proof}

\bibliographystyle{alpha}
\bibliography{references}

\end{document}